\newtheoremstyle{custom}
  {3pt}
  {3pt}
  {\slshape}
  {}
  {\bfseries}
  {.}
  { }
   {}
\theoremstyle{custom}
\newtheorem{theorem}{Theorem}[section]
\newtheorem{proposition}[theorem]{Proposition}
\newtheorem{proposition/definition}[theorem]{Proposition/Definition}
\newtheorem{lemma}[theorem]{Lemma}
\theoremstyle{definition}
\theoremstyle{remark}
\newtheorem{remark}[theorem]{Remark}
\newtheoremstyle{exercise}
  {3pt}
  {6pt}
  {}
  {}
  {\bfseries}
  {:}
  { }
   {}
\theoremstyle{exercise}
\newtheorem{exercise}[theorem]{Exercise}
\newtheoremstyle{exercises}
  {3pt}
  {6pt}
  {}
  {}
  {\bfseries}
  {:}
  {\newline}
   {}
\theoremstyle{exercise}
\newtheorem{exercises}[theorem]{Exercises}
\def\boxit#1{\vbox{\hrule height1pt\hbox{\vrule width1pt\kern3pt
  \vbox{\kern3pt#1\kern3pt}\kern3pt\vrule width1pt}\hrule height1pt}}
\def\bv{\bold v}
\def\BC{\mathbb C}
\def\BP{\mathbb P}
\def\tdim{{\rm dim}}
\def\hd{,...,}
\def\ww{\wedge}
\def\inv{{}^{-1}}
\def\11{\mathbf 1}
\def\a{\alpha}
\def\b{\beta}
\def\ot{{\mathord{ \otimes } }}
\def\ra{{\mathord{\;\rightarrow\;}}}
\def\La#1{\Lambda^{#1}}
\def\a{\alpha}
\def\b{\beta}
\def\BP{\mathbb  P}
\def\BC{\mathbb  C}
\def\hd{, \hdots ,}
\def\inv{{}^{-1}}
\def\La#1{\Lambda^{#1}}
\def\ur{\underline {\bold R}}
\def\ra{\rightarrow}
\def\tdeg{\operatorname{deg}}
\def\tdet{\operatorname{det}}
\def\ttrace{\operatorname{trace}}
\def\tdim{\operatorname{dim}}
\def\ww{\wedge}
\def\bbb{{\bold{b}}}
\def\be{\begin{equation}}
\def\ene{\end{equation}}
\def\aaa{{\bold {a}}}
\def\bbb{{\bold {b}}}
\def\ccc{{\bold {c}}}
\DeclareMathOperator{\tlog}{log}
\def\Mn{M_{\langle \nnn,\nnn,\nnn\rangle}}
\def\Mnl{M_{\langle \mmm,\nnn,\lll\rangle}}
\def\aaa{{\bold a}}\def\bbb{{\bold b}}\def\ccc{{\bold c}}
\def\mmm{\bold m}\def\nnn{\bold n}\def\lll{\bold l}
\begin{document}

\title{New lower bounds for the rank of matrix multiplication}
\author{J.M. Landsberg}
 \begin{abstract} The rank of the matrix multiplication operator for $\nnn\times\nnn$ matrices is one
of the most studied quantities in algebraic complexity theory. I prove   that the  rank is
at least $3\nnn^2-o(\nnn^2)$. More precisely,   for any integer $p\leq  \nnn -1$   the   rank
is at least 
$(3-\frac 1{p+1})\nnn^2-(1+2p\binom{2p}{p-1})\nnn$.   The previous lower bound, due to Bl\"aser, was $\frac 52\nnn^2-3\nnn$ (the
case $p=1$).
 The new bounds  improve  Bl\"aser's bound  for all $\nnn>84$.
I also prove lower bounds for rectangular matrices significantly better than the the previous bound.
\end{abstract}
\thanks{supported by NSF grant  DMS-1006353}
\email{jml@math.tamu.edu}
\keywords{rank, matrix multiplication, MSC 68Q17}
\maketitle

\section{Introduction} 
  
Let $X=(x^i_j)$, $Y=(y^i_j)$ be $\nnn\times \nnn$-matrices with indeterminant  entries. The
{\it rank}  of matrix multiplication, denoted $\bold R(\Mn)$, is the smallest number $r$  of products
$p_{\rho}=u_{\rho}(X)v_{\rho}(Y)$, $1\leq \rho\leq r$,  where $u_{\rho},v_{\rho}$ are  linear forms, such that the entries
of the matrix product $XY$ are contained in the linear span of the $p_{\rho}$. This quantity is also
called the bilinear complexity of $\nnn\times\nnn$ matrix multiplication. 
More generally, one may define the rank $\bold R(b)$  of any bilinear map $b$, 
see \S\ref{tensorsect}.

From the point of view of geometry, rank is badly behaved as it is not semi-continuous. Geometers usually
prefer to work with the {\it border rank} of matrix multiplication, which fixes the semi-continuity problem
by fiat: the border rank of  a bilinear map $b$, denoted $\ur(b)$,  is the smallest $r$ such that $b$  can be
approximated to arbitrary precision by bilinear maps of rank $r$. By definition, one has $\bold R(b)\geq \ur (b)$.
A more formal definition is given in \S\ref{tensorsect}.

Let
$M_{\langle \nnn,\mmm,\lll\rangle}$ denote the multiplication of an $\nnn\times\mmm$ matrix
by an  $\mmm\times \lll$ matrix. 
In \cite{LOsecbnd} G. Ottaviani and I gave new lower bounds for the border rank of matrix multiplication, namely,
for all $p\leq \nnn-1$, 
$\ur(M_{\langle \nnn,\nnn,\mmm\rangle })\geq \frac{2p+1}{p+1}\nnn\mmm$. Taking $p=\nnn-1$ gives the bound 
$\ur(M_{\langle \nnn,\nnn,\mmm\rangle })\geq 2\nnn\mmm-\mmm$.
In this article it will be advantageous to work with a smaller value of $p$.
The  results of \cite{LOsecbnd}  are used here to prove:

\begin{theorem}\label{mainthmrect} Let $p\leq   \nnn-1$ be a natural number. Then
$$
\bold R(M_{\langle \nnn,\nnn,\mmm\rangle })\geq  \frac {2p+1}{p+1} \nnn\mmm +\nnn^2 -(1+2p\binom{2p}{p-1})\nnn.
$$
\end{theorem}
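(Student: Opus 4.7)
The plan is to combine the Landsberg--Ottaviani border rank bound from \cite{LOsecbnd} with the substitution method in the spirit of Pan, de~Groote, and Bl\"aser, extending the $p=1$ analysis (which yields Bl\"aser's $\frac{5}{2}\nnn^2-3\nnn$ bound) to arbitrary $p$. Given a hypothetical rank-$r$ decomposition
$$M_{\langle \nnn,\nnn,\mmm\rangle}=\sum_{\rho=1}^r u_\rho\otimes v_\rho\otimes w_\rho,$$
one applies a sequence of carefully chosen linear substitutions on the variables of $X$ (one row at a time, replacing the row by a linear combination of the remaining rows plus a correction term in the $Y$-variables), and counts how many rank-one summands are forced to vanish at each step.

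The central technical ingredient is a \emph{substitution lemma} of the following shape: a single substitution, peeling one row of $X$, eliminates at least
$$\tfrac{2p+1}{p+1}\mmm+\bigl(\nnn-(1+2p\tbinom{2p}{p-1})\bigr)$$
of the rank-one summands, leaving a rank decomposition of $M_{\langle \nnn-1,\nnn,\mmm\rangle}$ in the surviving summands. Iterating the substitution $\nnn$ times exhausts the $X$-rows and accumulates
$$\nnn\left(\tfrac{2p+1}{p+1}\mmm+\nnn-(1+2p\tbinom{2p}{p-1})\right)=\tfrac{2p+1}{p+1}\nnn\mmm+\nnn^2-(1+2p\tbinom{2p}{p-1})\nnn$$
eliminated summands, matching the theorem. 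The per-step savings splits conceptually into two contributions: a border rank contribution $\frac{2p+1}{p+1}\mmm$, obtained by applying the $p$-th Koszul-flattening argument of \cite{LOsecbnd} to the row-wise piece that is being peeled off, and a dimension contribution $\nnn-(1+2p\binom{2p}{p-1})$, coming from a rank-count of the $u_\rho$'s that are forced to lie in a distinguished subspace of $X^*$ determined by the substitution.

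The main obstacle is the substitution lemma itself, and specifically the overhead term $1+2p\binom{2p}{p-1}$. This binomial must emerge as the dimension of an explicit subspace (a kernel or cokernel) in the Koszul complex underlying the $p$-th flattening of \cite{LOsecbnd}; at $p=1$ the count $1+2\binom{2}{0}=3$ is elementary and recovers Bl\"aser's bound, but for general $p$ it requires a delicate representation-theoretic or direct combinatorial argument tying the linear substitution to the Koszul flattening. Additional bookkeeping is needed in the boundary regime where the residual matrix multiplication becomes too small for the \cite{LOsecbnd} bound to apply with the chosen $p$, but this only affects $O(\nnn)$ error terms already absorbed in the overhead.
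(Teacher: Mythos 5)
Your plan leans on an unproven \emph{substitution lemma} that does all the work, and the lemma is not correct in the form you state it; the paper does not proceed this way. Specifically, you claim that a single row-substitution eliminates at least $\frac{2p+1}{p+1}\mmm+\nnn-(1+2p\binom{2p}{p-1})$ summands \emph{and} leaves a decomposition of $M_{\langle \nnn-1,\nnn,\mmm\rangle}$, with the iteration over $\nnn$ rows then adding up the Koszul-flattening border rank bound term by term. But the Landsberg--Ottaviani bound $\ur(M_{\langle\nnn,\nnn,\mmm\rangle})\geq\frac{2p+1}{p+1}\nnn\mmm$ is a global, non-additive statement about the whole tensor; there is no known way to "charge" $\frac{2p+1}{p+1}\mmm$ of it to each row of $X$ as the peeling proceeds, and the residual tensor after a substitution is generally \emph{not} a clean smaller matrix multiplication. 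You acknowledge the lemma is the main obstacle, but that lemma is the entire content of the theorem, and it is also not what the $p=1$ case (Bl\"aser) is actually a specialization of.

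The paper's argument is one-shot, not iterative. Proposition~\ref{blasprop} is applied a single time: write $T=T_1+T_2$ where $T_2$ retains at most $d$ of the $A$-linearly-independent pieces needed so the degree-$d$ polynomial $P$ still does not vanish, giving $\bold R(T)\geq(\aaa-d)+\ur(T_2)\geq(\aaa-d)+r$. The whole improvement over \cite{LOsecbnd} comes from driving $d$ down, via two reductions specific to the matrix multiplication tensor that your proposal does not contain: (i) the Koszul-flattening map $(M_{\langle\nnn,\nnn,\mmm\rangle})^{\ww p}_{A'}$ is the identity on $M^*$ tensored with a reduced map $\tilde M^{\ww p}_{A'}:\La p A'\ot L\ra\La{p+1}A'\ot N$, so its determinant is detected by a polynomial of degree $\binom{2p+1}{p}\nnn$ rather than $\binom{2p+1}{p}\nnn^2$; (ii) after normalizing $X_0$ to the identity (costing $\nnn$ basis vectors by Lemma~\ref{alphachoicelem}), the nonvanishing is detected by $\det(Q\tilde Q)$, a polynomial of degree $\binom{2p}{p-1}\nnn$ on a Grassmannian $G(2p,\cdot)$, and Lemma~\ref{alphachoicelemg} shows $2p\binom{2p}{p-1}\nnn$ basis vectors suffice. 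Adding $\nnn+2p\binom{2p}{p-1}\nnn$ gives the error term. This explicit degree-and-Grassmannian bookkeeping is what produces $1+2p\binom{2p}{p-1}$; your guess that it should be "the dimension of an explicit kernel or cokernel in the Koszul complex" is not how it actually arises, and without these reductions your proposal has no route to the stated bound.
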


The previous bound, due to Bl\"aser \cite{Bl2}, was  $\bold R(M_{\langle \nnn,\nnn,\mmm\rangle })\geq 2\nnn\mmm-\mmm+2\nnn-2$.  
For square matrices Theorem \ref{mainthmrect} specializes to:

\begin{theorem}\label{mainthm} Let $p\leq  \nnn-1$  be a natural number. Then
$$
\bold R(\Mn)\geq (3- \frac 1{p+1})\nnn^2-(1+2p\binom{2p }{p-1})\nnn.
$$
In particular, $\bold R(\Mn)\geq 3\nnn^2-o(\nnn^2)$.
\end{theorem}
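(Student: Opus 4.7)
The square-matrix bound follows from Theorem \ref{mainthmrect} by setting $\mmm = \nnn$: the two quadratic terms combine as
$$
\frac{2p+1}{p+1}\nnn^2 + \nnn^2 \;=\; \frac{3p+2}{p+1}\nnn^2 \;=\; \left(3 - \frac{1}{p+1}\right)\nnn^2,
$$
while the linear error term is unchanged. So the real task is to prove Theorem \ref{mainthmrect}; the plan below targets that statement.

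My strategy would be to combine two ingredients. The first is the border rank inequality $\ur(\Mnm) \geq \frac{2p+1}{p+1}\nnn\mmm$ of Landsberg--Ottaviani \cite{LOsecbnd}, which already supplies the $\nnn\mmm$ coefficient but says nothing about rank above border rank. The second is a substitution-method argument in the spirit of Bl\"aser \cite{Bl2}, which I would use to manufacture the additional $\nnn^2$.

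Concretely, starting from an optimal rank decomposition $XY = \sum_{\rho=1}^{r} u_\rho(X) v_\rho(Y) w_\rho$ with $r = \bold R(\Mnm)$, I would look for summands whose $X$-factor $u_\rho$ lies in a subspace of $X$-space of controlled dimension, so that substituting particular values for the entries of one row (or column) of $X$ kills those summands and realizes a rank decomposition of $M_{\langle \nnn-1,\nnn,\mmm\rangle}$ of length $r$ minus the number of summands absorbed. Iterating this reduction $\nnn$ times should peel off, up to a lower order error, a total of $\nnn^2$ rank-one terms and leave us with a very thin rectangular matrix multiplication tensor to which the border rank bound of \cite{LOsecbnd} is then applied. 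Summing the savings from the substitutions with the border rank contribution would produce the claimed $\frac{2p+1}{p+1}\nnn\mmm + \nnn^2$ main term.

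The main obstacle, and where I would expect the delicate work to live, is the bookkeeping for the summands that \emph{cannot} be absorbed in each round. These exceptional terms should be controlled by an auxiliary linear-algebraic object; I expect it to be a variant of the Koszul-type flattening used in \cite{LOsecbnd}, which is precisely what produces the factor $\binom{2p}{p-1}$ in the error term. Matching the two ingredients so that the final error is exactly $(1+2p\binom{2p}{p-1})\nnn$ rather than something larger will require carefully tracking which summands contribute to which flattening after each substitution, and arranging the inductive step so that the flattening rank bound passes cleanly from $\Mnm$ to $M_{\langle \nnn-1,\nnn,\mmm\rangle}$. That synthesis — rather than either the border rank input or the substitution move in isolation — is the essential point.
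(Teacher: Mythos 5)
Your derivation of Theorem~\ref{mainthm} from Theorem~\ref{mainthmrect} is correct and is exactly what the paper does: set $\mmm=\nnn$, note $\frac{2p+1}{p+1}=2-\frac{1}{p+1}$, and observe the linear term is untouched. (You did not address the ``in particular'' clause; the paper obtains $3\nnn^2-o(\nnn^2)$ by taking $p=\lfloor\sqrt{\log\nnn}\rfloor$, but this is immediate.) So the real point of comparison is your sketch for Theorem~\ref{mainthmrect}, and there your route diverges from the paper's in a way that leaves genuine gaps.

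The paper does \emph{not} iterate a format reduction $\nnn\to\nnn-1$. It applies the substitution method once, packaged as Proposition~\ref{blasprop}: if $P$ is a degree-$d$ polynomial with $P(T)\neq 0\Rightarrow\ur(T)>r$ and $T:A^*\to B\ot C$ is injective, then $\bold R(T)\geq r+\aaa-d$. With $\aaa=\nnn^2$ this supplies the entire $+\nnn^2$ in one shot, rather than by peeling off a row at a time. The heart of the paper is then a \emph{degree analysis} of the relevant polynomial. The Koszul flattening determinant $\det(T^{\ww p}_{A'})$ has degree $\binom{2p+1}{p}\nnn\mmm$ in general, but applied to $\Mnm$ it factors through $Id_{M}$, so its nonvanishing is equivalent to that of $\det(\tilde M^{\ww p}_{A'})$, degree $\binom{2p+1}{p}\nnn$ (equation \eqref{estar}). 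A second factorization (normalize $X_0$ to the identity, then use the block formula \eqref{detiden}) reduces this further to $\det(X_0)$ of degree $\nnn$ and $\det(Q\tilde Q)$, a polynomial of Grassmannian degree $\binom{2p}{p-1}\nnn$ on $G(2p,\cdot)$. Lemma~\ref{alphachoicelem} on the first factor and Lemma~\ref{alphachoicelemg} on the second cost $\nnn$ and $2p\binom{2p}{p-1}\nnn$ basis vectors respectively, which is where the exact error term $(1+2p\binom{2p}{p-1})\nnn$ comes from.

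Your iterative proposal is underspecified precisely at the points the paper's degree analysis resolves. It is not clear how many summands each row substitution can be guaranteed to absorb, what bounds propagate to $M_{\langle\nnn-1,\nnn,\mmm\rangle}$, which ``thin'' tensor receives the border rank bound at the end, or how the per-step losses would sum to exactly $(1+2p\binom{2p}{p-1})\nnn$ rather than something with a $\binom{2p+1}{p}$ or worse. You correctly identify the two inputs (the Landsberg--Ottaviani border rank bound and a Bl\"aser-style substitution), and you guess that the $\binom{2p}{p-1}$ traces to the Koszul flattening, but the synthesis you describe is not the one that works here: what makes the argument close is the single application of Proposition~\ref{blasprop} together with the two-stage factorization of the flattening determinant, not an induction on the matrix format.
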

The \lq\lq in particular\rq\rq\ follows by setting e.g.,  $p=\lfloor \sqrt{\tlog(\nnn)}\rfloor$. 

This improves Bl\"aser's bound \cite{Bl1} of  $\frac 52\nnn^2-3\nnn$ (the case $p=1$)  for all $\nnn>84$.  
Working under my direction, Alex Massarenti and  Emanuele Raviolo \cite{MR3034546,MRgoodbnd}
improved the error term in Theorem \ref{mainthm}.
In a  preprint of this article I made a mistake in computing
the error term. Unfortunately this mistake was not noticed before Massarenti and Raviolo's paper \cite{MR3034546} was published, repeating
the error, although their contribution is completely correct and their correct bound will appear  in \cite{MRgoodbnd}.

\begin{remark}If $T$ is a tensor of border rank $r$, where the approximating
curve of rank $r$ tensors limits in such a way that $q$ derivatives of the curve are used, then the
rank of $T$ is at most $(2q-1)r$, see \cite[Prop. 15.26]{BCS}.
In \cite{MR1018840} they give explicit, but very large  upper  bounds on the order of approximation $h$ needed to write a tensor of border rank
$r$ as  lying in  the $h$-jet of a curve of tensors of rank $r$. 
\end{remark} 

The language of tensors will be used throughout.  In \S\ref{tensorsect} the language of tensors is introduced and previous
work of Bl\"aser and others is rephrased in a language suitable for generalizations. In \S\ref{mmultastensor}   I describe
the equations of \cite{LOsecbnd} and give a very easy proof of a slightly weaker result than Theorem \ref{mainthmrect}.
In \S\ref{LOeqns} I express the equations in coordinates and prove Theorem \ref{mainthmrect}.
I work  over the complex numbers throughout.

\subsection*{Acknowledgement} I thank the anonymous referee for useful suggestions and C. Ikenmeyer for  
help with the exposition.

\section{Ranks and border ranks of tensors} \label{tensorsect}
 
Let $A,B,C$ be vector spaces, of dimensions $\aaa,\bbb,\ccc$ and 
with dual spaces $A^*,B^*,C^*$. That is, $A^*$ is the  space of linear maps $A\ra \BC$. Write
$A^*\ot B$ for the space of linear maps $A\ra B$ and $A^*\ot B^*\ot C$ for the  space
of bilinear maps $A\times B\ra C$. To avoid extra $*$-s, I work   with bilinear maps $A^*\times B^*\ra C$,
i.e., elements of $A\ot B\ot C$.
Let $T: A^*\times B^*\ra C$ be a bilinear map.
 One  may also consider $T$ as a linear map $T:A^*\ra B\ot C$  (and similarly with the roles of $A,B,C$ exchanged), or
as a trilinear map $A^*\times B^*\times C^*\ra \BC$.
 
  The {\it rank} of a bilinear map  $T: A^*\times B^*\ra C$, denoted $\bold R(T)$,  is the smallest $r$ such that there
exist $a_1\hd a_r\in A$, $b_1\hd b_r\in B$, $c_1\hd c_r\in C$ such that
$T(\a,\b)=\sum_{i=1}^r a_i(\a)b_i(\b)c_i$ for all $\a\in A^*$ and $\b\in B^*$. 
The {\it border rank} of $T$, denoted $\ur(T)$, is 
the smallest $r$ such that $T$ may be written as a limit of a sequence of rank $r$ tensors.
Since the set of tensors of border rank at most $r$ is closed,
one can use polynomials to obtain lower bounds on border rank.
That is, let $P$ be a polynomial on $A\ot B\ot C$ such that $P$ vanishes on all
tensors of border rank at most $r$:   if $T\in A\ot B\ot C$ is such that  $P(T)\neq 0$,
then  $\ur(T)>r$.

The following proposition is a rephrasing of part of the proof in \cite{Bl1}:

\begin{proposition}\label{blasprop} Let $P$ be a   polynomial of degree $d$ on $A\ot B\ot C$
such that $P(T)\neq 0$ implies $\ur(T)>r$. Let $T\in A\ot B\ot C$ be a tensor such that $P(T)\neq 0$
and  $T: A^*\ra  B\ot C$ is injective.  
Then $\bold R(T)\geq r+\aaa-d$.
\end{proposition}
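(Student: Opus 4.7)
My plan is to prove the proposition by a multi-parameter substitution argument. Suppose for contradiction that $\bold R(T) = s$ satisfies $s \leq r + \aaa - d - 1$, and fix a rank decomposition $T = \sum_{i=1}^s u_i\ot v_i\ot w_i$. The injectivity of $T: A^*\to B\ot C$ forces the $u_i$ to span $A$, since any $\alpha\in A^*$ vanishing on every $u_i$ would yield $T(\alpha) = 0$. After reindexing I may therefore assume that $u_1,\ldots,u_{\aaa}$ form a basis of $A$.

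Next, I would introduce parameters $t = (t_1,\ldots,t_{\aaa}) \in \BC^{\aaa}$ and form the family
$$T_t := \sum_{i=1}^{\aaa} t_i\, u_i\ot v_i\ot w_i + \sum_{i=\aaa+1}^s u_i\ot v_i\ot w_i,$$
so that $T_{(1,\ldots,1)} = T$ and every $T_t$ is visibly a sum of at most $s$ rank-one tensors. Setting $Q(t) := P(T_t)$, the linear dependence of $T_t$ on $t$ together with $\deg P = d$ makes $Q$ a polynomial in $t_1,\ldots,t_{\aaa}$ of total degree at most $d$. For every subset $S\subseteq\{1,\ldots,\aaa\}$ of size $s-r$, substituting $t_i = 0$ for $i\in S$ turns $T_t$ into a sum of at most $s - |S| = r$ rank-one tensors, so its border rank is at most $r$ and $P$ vanishes on it; consequently $Q$ vanishes identically on the coordinate subspace $\{t : t_i = 0 \text{ for } i\in S\}$.

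The closing step is purely combinatorial. Expand $Q = \sum_m c_m m$ as a sum of monomials and let $V_m \subseteq \{1,\ldots,\aaa\}$ denote the set of variables appearing in $m$. A monomial survives the restriction $t_i = 0, i\in S$ precisely when $V_m\subseteq S^c$, and since the restricted polynomial equals the sum of the surviving monomials (in the remaining variables), each such coefficient $c_m$ must vanish. Now $|V_m|\leq \deg m\leq d$, and the assumption $s\leq r+\aaa-d-1$ gives $\aaa - (s-r)\geq d+1 > |V_m|$, so I can always find a subset $S$ of size $s-r$ with $V_m \subseteq S^c$. Hence every $c_m = 0$, so $Q\equiv 0$, contradicting $Q(1,\ldots,1) = P(T) \neq 0$. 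This proves $\bold R(T) \geq r+\aaa-d$.

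The fussiest step will be the combinatorial bookkeeping in the last paragraph, in particular keeping careful track of which coordinate subspaces are forced into the zero locus of $Q$ (via the rank drop) and verifying that the degree bound on $P$ is really strong enough to annihilate every monomial once the injectivity hypothesis has supplied $\aaa$ independent ``absorbing'' terms. The ancillary fact that rank $\leq r$ implies border rank $\leq r$, used to invoke the vanishing of $P$, is immediate from the closedness of the border-rank-$\leq r$ locus noted in \S\ref{tensorsect}.
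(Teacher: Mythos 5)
Your proof is correct and rests on the same underlying idea as the paper's: the injectivity of $T:A^*\to B\ot C$ supplies $\aaa$ rank-one summands whose $A$-parts form a basis, the one-parameter scalings $t_i$ give a degree-$\leq d$ polynomial $Q(t)=P(T_t)$, and Lemma~\ref{alphachoicelem} (equivalently, your observation that a degree-$d$ polynomial cannot vanish on all coordinate subspaces of dimension $>d$) controls which of those $\aaa$ summands matter. Your contrapositive packaging — zeroing out every size-$(s-r)$ subset $S$ of coordinates to kill every monomial — is just the dual formulation of the paper's choice of a single nonzero monomial with at most $d$ variables, and it has the small advantage of arguing purely at the level of coefficients: you never need to evaluate $P$ at a particular scaled tensor $T_2$, whereas the paper's $P(T_2)\neq 0$ step, read literally, really asserts nonvanishing for a generic rescaling of the chosen summands rather than for the indicator-vector point.
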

As stated, the proposition is useless, as the degrees of polynomials vanishing on 
  on all tensors
of border rank at most $r$ are greater than $r$. (A general tensor of border rank
$r$ also has rank $r$.) However the conclusion still holds if one can find, for a given
tensor $T$, a polynomial, or collection of polynomials on smaller spaces, such that the nonvanishing of $P$ on $T$ 
  is equivalent to the non-vanishing of the new polynomials.  Then one   substitutes the smaller degree into the statement to obtain the nontrivial
lower bound. 

In our situation, first I will show
  $P(M_{\langle \nnn,\nnn,\mmm\rangle})\neq 0$ if and only if $\tilde P (\tilde M)\neq 0$ where $\tilde M$ is
  a tensor in a smaller space of tensors and $\tilde P$ is a polynomial of lower degree than $P$, see \eqref{estar}.
More precisely, note that in the course of the proof, $B\ot C$ does not play a role, and we will see that the
relevant polynomial, when applied to matrix multiplication $M\in A\ot B\ot C=A\ot \BC^{\nnn^2}\ot \BC^{\nnn^2}$,
will not vanish if and only if a polynomial $\tilde P$ applied to $\tilde M\in A\ot \BC^{\nnn}\ot \BC^{\nnn}$
with $\tdeg(\tilde P)=\tdeg(P)/\nnn$, does not vanish, so the proof below works in this case.  
Then,  in \S\ref{LOeqns},  I  show that $\tilde P (\tilde M)\neq 0$ is implied by the non-vanishing of two polynomials of even smaller degrees.

This is why both Bl\"aser's result and the result of this paper improve
the bound of border rank by $\aaa=\nnn^2$ minus an error term, where Bl\"aser improves
Strassen's bound and I improve the bound of \cite{LOsecbnd}. (Bl\"aser shows Strassen's equations
for border rank, when applied to the 
  matrix multiplication tensor, are equivalent to the non-vanishing of three  polynomials of 
degree $\nnn$, hence the error term of $3\nnn$. See \cite[\S 11.5]{MR2865915} for an exposition.)

To prove the Proposition, we need a standard Lemma, also used in  \cite{Bl2}, which appears in
this form in \cite[Lemma 11.5.0.2]{MR2865915}:

\begin{lemma} \label{alphachoicelem} 
Let $ \BC^{\aaa}$ be given a basis.
Given  a    polynomial $P$ of degree $d$ on $\BC^{\aaa}$, there exists 
a set of at most  $d$ basis vectors such that 
$P$ restricted to their span is not identically zero.
\end{lemma}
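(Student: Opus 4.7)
My plan is to prove the lemma by extracting an explicit nonzero monomial of $P$ and restricting attention to the basis vectors corresponding to its support.

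Fix the basis $e_1,\ldots,e_{\aaa}$ of $\BC^{\aaa}$ and let $x_1,\ldots,x_{\aaa}$ be the dual coordinates, so that $P=P(x_1,\ldots,x_{\aaa})$ is a polynomial of degree $d$. Since $P\neq 0$, at least one monomial $c\,x_{i_1}^{d_1}\cdots x_{i_k}^{d_k}$ appears in $P$ with nonzero coefficient $c$, where $i_1,\ldots,i_k$ are distinct indices and each $d_j\geq 1$. Because $d_1+\cdots+d_k\leq d$, we automatically get $k\leq d$. I will argue that the span $V:=\mathrm{span}(e_{i_1},\ldots,e_{i_k})$ is the required subspace.

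To see that $P|_V\not\equiv 0$, parametrize $V$ by setting $x_j=0$ for $j\notin\{i_1,\ldots,i_k\}$. Then $P|_V$ is obtained from $P$ by deleting every monomial that involves at least one of the vanished variables, while leaving every monomial supported in $\{x_{i_1},\ldots,x_{i_k}\}$ untouched. In particular, the distinguished monomial $c\,x_{i_1}^{d_1}\cdots x_{i_k}^{d_k}$ appears in $P|_V$ with its original coefficient $c\neq 0$, and it cannot be cancelled by any other term since distinct monomials in $x_{i_1},\ldots,x_{i_k}$ are linearly independent. Hence $P|_V\not\equiv 0$, and $V$ is spanned by at most $d$ of the given basis vectors, as required.

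There is essentially no obstacle: the lemma is purely elementary, and the proof reduces to the trivial combinatorial observation that any monomial of degree at most $d$ involves at most $d$ distinct variables. The only point worth flagging is that one must use a coordinate presentation in the given basis (not a general monomial expansion), which is why the statement explicitly fixes the basis of $\BC^{\aaa}$ at the outset.
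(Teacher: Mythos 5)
Your proof is correct and follows exactly the same idea as the paper, which simply observes that a monomial appearing in $P$ can involve at most $d$ basis vectors. You have merely spelled out the restriction-to-coordinate-subspace step that the paper leaves implicit.
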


The lemma follows by simply choosing a monomial that appears in $P$, as it can involve at most $d$ basis vectors.

\begin{proof}[Proof of Proposition  \ref{blasprop}]
Let $\bold R(T)=r$ and assume we have written $T$ as a sum of $r$ rank one tensors.
Since $T: A^*\ra  B\ot C$ is injective    we may  write $T=T'+T''$
with $\bold R(T')=\aaa$, $\bold R(T'')= r-\aaa$
and $ T': A^*\ra  B\ot C$    injective.
Now consider the $\aaa$ elements of $A\ot B\ot C$ appearing in
$T'$. Since they are linearly independent,
by Lemma \ref{alphachoicelem}
we may choose a subset of $d$ of them  such that $P$,  evaluated on 
the sum of terms in $T$ whose $A$ terms are in the span of these $d$ elements,  is not identically zero.
  Let $T_1$ denote the sum of the terms in $T'$
not involving the (at most) $d$ basis vectors needed for nonvanishing, so $\bold R (T_1)\geq \aaa- d$.
Let $T_2=T-T_1+T''$.
 Now $\ur(T_2)\geq r$ because  $P(T_2)\neq 0$.   Finally $\bold R(T)=\bold R(T_1)+\bold R(T_2)$.
\end{proof}

Let $G(k,V)\subset \BP \La k V$ denote the Grassmannian of $k$-planes through the origin in $V$ in its Pl\"ucker embedding.
That is, if a $k$ plane is spanned by $v_1\hd v_k$, we write it as $[v_1\ww\cdots \ww v_k]$. 
One says a function on $G(k,V)$ is a   polynomial of degree $d$ if,
as a function in the Pl\"ucker coordinates, it is a degree $d$   polynomial.
The Pl\"ucker coordinates $(x^{\mu}_{\a})$, $k+1\leq \mu\leq \tdim V=\bv$, $1\leq \a\leq k$ are obtained by choosing
a basis $e_1\hd e_{\bv}$ of $V$, centering the coordinates at $[e_1\ww\cdots \ww e_k]$,
and writing a nearby $k$-plane as $[(e_1+\sum x^{\mu}_1e_{\mu})\ww \cdots \ww (e_k+\sum x^{\mu}_ke_{\mu})]$. If the polynomial
is also homogeneous in the $x^{\mu}_{\a}$, this is equivalent to   it being  the restriction of
a homogeneous degree $d$ polynomial on $\La kV$. (The ambiguity of the scale does not matter as we are only concerned with
its vanishing.) 

\begin{lemma} \label{alphachoicelemg} 
Let $A$ be given a basis.
Given  a   homogeneous polynomial of degree $d$ on the Grassmannian $G(k,A)$, there exists 
at least $dk$ basis vectors    such that, denoting their (at most) $dk$-dimensional span by $A'$, 
$P$ restricted to $G(k,A')$ is not identically zero.
\end{lemma}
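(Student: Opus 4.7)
My plan is to reduce to Lemma \ref{alphachoicelem} by parametrizing the Grassmannian through spanning tuples. I'd represent $P$ by a homogeneous polynomial $\tilde P$ of degree $d$ on $\Lambda^k A$, written in the Pl\"ucker basis $\{e_J = e_{j_1}\wedge\cdots\wedge e_{j_k}\}$ of $\Lambda^k A$. For arbitrary vectors $v_i = \sum_{j=1}^{\aaa} a^i_j e_j$, $1 \le i \le k$, the Pl\"ucker coordinates of $v_1 \wedge \cdots \wedge v_k$ are the $k \times k$ minors $p_J(v) = \det(a^i_j)_{j \in J}$, each of degree $k$ in the $a^i_j$, so the pullback $f(a) := \tilde P(v_1 \wedge \cdots \wedge v_k)$ is a polynomial of total degree at most $dk$ in the $k\aaa$ variables $\{a^i_j\}$. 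Moreover, $f \not\equiv 0$ is equivalent to $P|_{G(k,A)} \not\equiv 0$, since the image of the map $(v_1,\ldots,v_k) \mapsto v_1\wedge\cdots\wedge v_k$ is the affine cone over $G(k,A)$.

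Next, I'd apply Lemma \ref{alphachoicelem} to $f$ with respect to the basis $\{a^i_j\}$: there is a set $T$ of at most $dk$ of these variables such that $f$ restricted to their span is not identically zero. Let $S \subseteq \{1,\dots,\aaa\}$ be the image of $T$ under the projection $(i,j) \mapsto j$, so $|S| \le |T| \le dk$, and set $A' = \operatorname{span}(e_j : j \in S)$. The subspace of $A^{\oplus k}$ cut out by $a^i_j = 0$ for $(i,j) \notin T$ is contained in $(A')^k$, so $f$ nonzero on the former forces $f$ nonzero on the latter. This produces $(v_1,\ldots,v_k) \in (A')^k$ with $\tilde P(v_1 \wedge \cdots \wedge v_k) \ne 0$, i.e., a point of the affine cone over $G(k,A')$ on which $\tilde P$ does not vanish; hence $P|_{G(k,A')} \not\equiv 0$.

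The argument is essentially a direct corollary of Lemma \ref{alphachoicelem}, and I do not anticipate a real obstacle. The one conceptual point worth care is that parametrizing via decomposable tensors bypasses any worry about the Pl\"ucker ideal: a priori one might fear that a polynomial on $\Lambda^k A$ visibly nonzero in Pl\"ucker monomials could still vanish on $G(k,A')$ because of Pl\"ucker relations, but this is not an issue here since $f$ only records values on the cone over the Grassmannian. The factor of $k$ distinguishing this statement from Lemma \ref{alphachoicelem} is produced entirely by the observation that each Pl\"ucker coordinate is a minor of degree $k$ in the entries $a^i_j$, and the projection $(i,j) \mapsto j$ only decreases cardinality, so the bound $|S|\le dk$ is automatic.
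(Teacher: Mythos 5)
Your proof is correct and follows the same core strategy as the paper: pull $P$ back along the surjective parametrization $A^{\times k}\to G(k,A)$ and reduce to Lemma~\ref{alphachoicelem}. The only difference is bookkeeping — the paper observes the pullback has multi-degree $(d,\dots,d)$ (degree $d$ in each of the $k$ slots) and applies Lemma~\ref{alphachoicelem} once per slot, accumulating at most $d$ basis vectors each time for a total of $dk$; you instead observe the total degree in all $k\aaa$ coordinates is at most $dk$, apply the base lemma once, and then project the chosen index pairs $(i,j)\mapsto j$ to land on at most $dk$ basis vectors of $A$. Both give the identical bound; the paper's version is marginally cleaner since it avoids the projection step and works directly with subsets of a single copy of $A$, while yours is slightly more self-contained in that it invokes the base lemma only once. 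You also correctly note the implicit point that $\tilde P(v_1\wedge\cdots\wedge v_k)\neq 0$ with $\tilde P$ homogeneous of positive degree forces $v_1\wedge\cdots\wedge v_k\neq 0$, hence linear independence and a genuine point of $G(k,A')$.
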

\begin{proof}
Consider the map $f:A^{\times k}\ra G(k,A)$ given by $(a_1\hd a_k)\mapsto [a_1\ww\cdots \ww a_k]$. Then $f$ is surjective.
Take the polynomial $P$ and pull it back by $f$. (The pullback $f^*(P)$ is defined by
$f^*(P)(a_1\hd a_k):=P(f(a_1\hd a_k))$.)  The pullback is   of degree $d$ in each copy of $A$. (I.e., fixing $k-1$ parameters,
it becomes a degree $d$ polynomial in the $k$-th.) Now simply apply
Lemma \ref{alphachoicelem} $k$ times to see that the pulled back polynomial is not identically zero restricted to $A'$, and thus
$P$ restricted to $G(k,A')$ is not identically zero.
\end{proof} 

\begin{remark} The bound in Lemma \ref{alphachoicelemg} is  sharp, as give $A$ a basis $a_1\hd a_{\aaa}$ and  consider the polynomial on $\La k A$ with
coordinates $x^I=x^{i_1}\hd x^{i_k}$ corresponding to the vector $\sum_I x^Ia_{i_1}\ww \cdots \ww a_{i_k}$:
$P=x^{1\hd k}x^{k+1\hd 2k}\cdots x^{(d-1)k+1\hd dk}$.
Then $P$ restricted to $G(k, \langle a_1\hd a_{dk}\rangle )$ is non-vanishing but there is no smaller subspace spanned
by basis vectors on which it is non-vanishing.
\end{remark}

\section{Matrix multiplication and its rank}\label{mmultastensor}

 Let  $\Mnl:  Mat_{\mmm\times \nnn}
 \times Mat_{\nnn\times \lll}\ra Mat_{\mmm\times\lll}$ 
 denote the matrix multiplication operator. Write $M=\BC^{\mmm}$, $N=\BC^{\nnn}$ and $L=\BC^{\lll}$.
Then 
$$
\Mnl :(N\ot L^*) \times (L\ot M^*)\ra N\ot M^*
$$
has the   interpretation as $\Mnl =Id_N\ot Id_M\ot Id_L\in (N^*\ot L ) \ot (L^*\ot M )\ot( N\ot M^*)$,
where $Id_N\in  N^*\ot N$ is the identity map. 
If one thinks of $\Mnl$ as a trilinear map $(N\ot L^*) \times (L\ot M^*) \times ( N\ot M^*)\ra \BC$,  in bases it is
$(X,Y,Z)\mapsto \ttrace(XYZ)$.
If one thinks of $\Mnl$ as a linear map $N\ot L^*\ra (L^*\ot M) \ot ( N\ot M^*)$ it is just the identity map tensored with $Id_M$.
  In particular,
if   $\a\in N \ot L^* $ is of   rank $q$, its image, considered as a linear map $L\ot M^*\ra N\ot M^*$,  is of  rank $q\mmm$.

\medskip

Returning to general tensors $T\in A\ot B\ot C$, from now on assume $\bbb=\ccc$. 
When $T=\Mnl$, one has    $A= N^*\ot L $, $B= L^*\ot M $, $C=  N\ot M^*$, so $\bbb=\ccc$ is equivalent to $\lll=\nnn$.

\medskip

The equations of \cite{LOsecbnd} are as follows: given $T\in A\ot B\ot C$, with $\bbb=\ccc$, take $A'\subset A$ of dimension $2p+1\leq \aaa$.
Define a  linear map
\be\label{LOmap}
T^{\ww p}_{A'}: \La p A'\ot B^*\ra \La{p+1}A'\ot C
\ene 
by first considering
$T|_{A'\ot B\ot C} : B^*\ra A'\ot C$ tensored with the identity map on $\La p A'$,
which is a map $  \La p A\ot B^* \ra \La p A \ot A\ot C$,
and then projecting the image to $\La{p+1}A'\ot C$.
Then if the determinant of this linear map is nonzero, the border rank of $T$ is at least $\frac{2p+1}{p+1}\bbb$.
If there exists an $A'$ such that the determinant is nonzero, we may think of the determinant
as a nontrivial homogeneous polynomial of degree $\binom{2p+1}p\bbb$ on $G(2p+1,A)$.

\medskip

Now consider the case $T=M_{\langle \nnn,\nnn,\mmm\rangle}$, and recall that $B=L^*\ot M$, $C=N\ot M^*$. The map 
$(M_{\langle \nnn,\nnn,\mmm\rangle})^{\ww p}_{A'}: \La p A'\ot L \ot M^*\ra \La{p+1}A'\ot N\ot M^*$ 
is actually a reduced map 
\be\label{estar}
\tilde M^{\ww p}_{A'}: \La p A'\ot L \ra \La{p+1}A'\ot N 
\ene 
tensored with
the identity map $M^*\ra M^*$, and thus its determinant is non-vanishing if and only if the determinant
of $\tilde M^{\ww p}_{A'}$ is nonvanishing. {\it But this is a polynomial of degree $\binom {2p+1}p\nnn<<\binom{2p+1}p \nnn^2$
on $G(2p+1,\nnn^2)$.}
Proposition \ref{blasprop} with $d=\binom{2p+1}p\nnn$,  $\aaa=\nnn^2$ and $r=\frac{2p+1}{p+1}\mmm\nnn$,  combined with Lemma \ref{alphachoicelemg} gives the bound
$$
\bold R(M_{\langle \nnn,\nnn,\mmm\rangle})\geq \frac{2p+1}{p+1}\nnn\mmm +\nnn^2   -(2p+1) \binom {2p+1}p\nnn.
$$
Note that this already gives the $3\nnn^2-o(\nnn^2)$ asymptotic lower bound. 
The remainder of the paper is dedicated to improving the error term. 
   
\section{The equations of \cite{LOsecbnd} in coordinates}\label{LOeqns}

Let  $\aaa=3$ (so $p=1$) and $\bbb=\ccc$, the   map \eqref{LOmap}  expressed in bases is a $3\bbb\times 3\bbb$ matrix.
If $a_0,a_1,a_2$ is a basis of $A$ and one chooses bases of $B,C$, then   elements of $B\ot C$ may be written as matrices, 
and  $T=a_0\ot X_0 + a_1\ot X_1 + a_2\ot X_2$,  where the $X_j$ are size $\bbb$ square matrices. Order the basis of
$A$ by $a_0,a_1,a_2$ and of $\La 2 A$ by $a_1\ww a_2,a_0\ww a_1,a_0\ww a_2$.
We compute
\begin{align*}
T_A^{\ww 1}(a_0\ot \b)&= \b(X_0)\ot a_0\ww a_0+\b(X_1)\ot a_1\ww a_0+\b(X_2)\ot a_2\ww a_0=-\b(X_1)\ot a_0\ww a_1-\b(X_2)\ot a_0\ww a_2,\\
 T_A^{\ww 1}(a_1\ot \b)&= \b(X_0)\ot a_0\ww a_1-\b(X_2)\ot a_1\ww a_2,\\
 T_A^{\ww 1}(a_2\ot \b)&= \b(X_0)\ot a_0\ww a_2+\b(X_1)\ot a_1\ww a_2,
\end{align*}
so  the corresponding matrix for $T_A^{\ww 1}$ is the block matrix
$$
Mat(T_A^{\ww 1})=\begin{pmatrix}
0 & -X_2 &  X_1\\ -X_1 & X_0 & 0\\ -X_2&0&X_0
\end{pmatrix}.
$$

 Now assume $X_0$ is invertible and change bases such that it is
the identity matrix. Recall the formula for block matrices
\be\label{detiden}
\tdet \begin{pmatrix} X&Y\\ Z&W\end{pmatrix}=
\tdet(W)\tdet(X-YW\inv Z),
\ene
assuming $W$ is invertible.
  Then, using the $(\bbb,2\bbb)\times (\bbb,2\bbb)$ blocking (so $X=0$ in \eqref{detiden}) 
$$
\tdet Mat(T_A^{\ww 1})= \tdet(X_1X_2-X_2X_1)=\tdet([X_1,X_2]).
$$
When  $\tdim A>3$,  if    there exists a three dimensional subspace $A'$  of $A$, such that 
  $\tdet Mat(T_{A'}^{\ww 1})\neq 0$, then   $\ur(T)\geq \frac 32\bbb$ as this is \eqref{LOmap}
  in the case $p=1$. These are Strassen's equations \cite{Strassen505}.

\medskip
I now phrase the equations of \cite{LOsecbnd} in coordinates. Let $\tdim A=2p+1$. 
Write $T=a_0\ot X_0+\cdots + a_{2p}\ot X_{2p}$. The expression of \eqref{LOmap}  in  bases 
is as follows: write
$a_I:=a_{i_1}\ww \cdots \ww a_{i_p}$ for $\La pA $,   require that    the first
$\binom{2p}{p-1}$ basis vectors have $i_1=0$, that  the second $\binom{2p}{p}$ do not, and  call these multi-indices
$0J$ and $K$. Order the bases of $\La{p+1}A$ such that the first $\binom{2p}{p+1}$ multi-indices do
not have $0$, and the second $\binom{2p}{p}$ do, and furthermore that the second set of indices is ordered
the same way as $K$, only we write $0K$ since a zero index is included.
Then the resulting matrix
is of the form
\be\label{taeqn}
\begin{pmatrix}
0 & Q\\ \tilde Q & R 
\end{pmatrix}
\ene
where this matrix is blocked $(\binom{2p}{p+1} \bbb, \binom{2p}p\bbb)\times (\binom{2p}{p+1}\bbb,\binom{2p}p\bbb)$, 
$$
R=
\begin{pmatrix}
X_0 & &  \\
  &\ddots   & \\
& &  X_0
\end{pmatrix}, 
$$
and   $Q,\tilde Q$ have entries  in blocks consisting of $X_1\hd X_{2p}$ and zero. 
  Thus if $X_0$ is the identity matrix, so is $R$ and the determinant equals the determinant of $Q\tilde Q$.
If $X_0$ is the identity matrix,  when $p=1$ we have $Q\tilde Q=[X_1,X_2]$ and when $p=2$
\be\label{5matrix}
Q\tilde Q= \begin{pmatrix}
0&  [X_1,X_2] & [X_1,X_3]& [X_1,X_4]\\
 [X_2,X_1]&0&[X_2,X_3]& [X_2,X_4]\\
 [X_3,X_1]&[X_3,X_2]&0& [X_3,X_4]\\
 [X_4,X_1]&[X_4,X_2]& [X_4,X_3]&0
\end{pmatrix}.
\ene

In general, when $X_0$ is the identity matrix,  $Q\tilde Q$ is a 
    block
 $\binom {2p}{p-1}\bbb\times \binom{2p}{p-1} \bbb$ matrix
whose   block entries are either zero or commutators $[X_i,X_j]$.

To prove Theorem \ref{mainthmrect} we work with $\tilde M_{A'}^{\ww p}$ of \eqref{estar}, so $\bbb=\nnn$.  
First apply Lemma \ref{alphachoicelem} to  choose $\nnn$ basis vectors such that restricted to them
$\tdet(X_0)$ is non-vanishing, and then we consider our polynomial $\tdet(Q\tilde Q)$  as defined on $G(2p,(2p+1)\nnn^2-1)$, 
and apply Lemma \ref{alphachoicelemg}, using 
$2p\binom{2p}{p-1}\nnn$ basis vectors to insure it is non-vanishing. Our error term is thus $\nnn+2p\binom{2p}{p-1}\nnn$, and the theorem follows. 


\begin{remark}\label{qextra}
 In \cite{MR3034546,MRgoodbnd}, they show the matrix $Q\tilde Q$ can be made to have a nonzero determinant by a   subtle combination
of factoring and splitting it into a sum of two matrices that carries a lower cost than just taking its determinant.
\end{remark}

 \bibliographystyle{amsplain}
 
\bibliography{Lmatrix}

\end{document}